\documentclass[letterpaper, 10 pt, conference]{ieeeconf}  

\IEEEoverridecommandlockouts                              
\usepackage{cite}
\usepackage[utf8]{inputenc}
\usepackage{graphics} 
\usepackage{amsmath} 
\usepackage{amssymb}  
\let\labelindent\relax
\usepackage{enumitem}

\usepackage{amsthm}
\usepackage{amsmath,amssymb,amsfonts}
\usepackage{algpseudocode}
\usepackage{algorithm}
\usepackage{subcaption}
\usepackage{graphicx}  
\usepackage{caption}    
\usepackage{subcaption} 
\usepackage{booktabs}
\usepackage{siunitx}
\usepackage{makecell}
\usepackage{threeparttable}
\usepackage{enumitem} 
\newtheorem{definition}{Definition}
\newtheorem{problem}{Problem}
\newtheorem{assumption}{Assumption}
\newtheorem{proposition}{Proposition}
\newtheorem{theorem}{Theorem}

\newtheorem{lemma}{Lemma}
\newtheorem{remark}{Remark}
\usepackage{tikz}
\usetikzlibrary{positioning}
\usetikzlibrary{arrows.meta}
\usepackage{subcaption} 
\usepackage{graphics} 
\usepackage{hyperref}
\newtheorem{test}{Test}
\usepackage{xcolor} 

\title{\LARGE \bf
HyParLyVe: Hyperplane Partitioning for Neural Lyapunov Verification 
\thanks{The authors are with the Elmore Family School of Electrical Engineering at Purdue University. E-mail: \{jwayment, yarbroub, wang6203, sundara2, philpare\}@purdue.edu. 
    This material is based upon work supported in part by the US National Science Foundation (NSF-ECCS \#2238388).\\
Code available at 
\url{https://github.com/Jw1836/HyParLyVe}}}

\author{Jesse Wayment, Brian Yarbrough, Jingbo Wang, Shreyas Sundaram, and Philip E. Par\'{e}}
\begin{document}

\maketitle
\thispagestyle{empty}
\pagestyle{empty}

\begin{abstract}
This work introduces HyParLyVe (Hyperplane Partitioned Lyapunov Verifier), a novel algorithm for sound and complete verification of neural Lyapunov candidates by interpreting shallow ReLU networks as hyperplane arrangements. This perspective reduces positive definiteness verification to a finite set of vertex evaluations, and the decrease condition to a bounded optimization problem over each region. We formally prove correctness of the proposed verification procedures and demonstrate that HyParLyVe achieves significant speedups over state-of-the-art methods.
\end{abstract}

\section{INTRODUCTION}

Lyapunov function construction is a standard way of proving asymptotic stability for nonlinear dynamical systems. The existence of a Lyapunov function further allows the discovery of other important characteristics of a system, such as invariant sets and regions of attraction. However, finding a Lyapunov function for a given dynamical system is nontrivial, as there is no universal analytic method. 

Recently, neural networks have been widely implemented to find Lyapunov functions \cite{chang2019neural,gaby2022lyapunov,grune2020computing,li2025two}. Due to the Universal Approximation Theorem, it is known that a single-hidden-layer neural network can approximate any continuous function to any desired degree \cite{RELU_UAT}. However, this approach has created a need for tools for neural network verification, i.e., tools that can determine whether a given neural network is a true Lyapunov function for a given system.

Neural network verification has been well-studied, especially for problems involving reachability analysis \cite{tran2019star,xiang2018reachability}. These methods are generally employed in robustness analysis of neural network controllers.
For the problem of verifying neural Lyapunov functions, prior work  focuses on certifying the \emph{Lyapunov conditions} of a learned candidate (zero at the origin, positive definiteness, and negative Lie derivative) rather than merely checking generic neural network properties. Earlier methods mainly relied on \emph{SMT}~\cite{liu2024tool, liu_learning_max_lyapunov, huang_MNLF}, \emph{SOS/SDP}~\cite{liu2024compositionally}, or \emph{MIP/MILP} encodings~\cite{chen2021learning,dai2020counter}, which are sound but often difficult to scale because they must reason jointly about the certificate, the controller, and the nonlinear dynamics. More recent work uses $\alpha,\beta$-CROWN~\cite{yang2024lyapunov,li2025two} to accelerate this process through bound propagation and branch-and-bound on the composed Lyapunov verification problem. However, its fast mode is sound but not complete, while its exhaustive branch-and-bound mode is sound and complete but can be prohibitively slow. 
%
Therefore, there remains a need for an \emph{efficient}, \emph{sound}, and \emph{complete}\footnote{Soundness means that if a model is verified, then it is guaranteed to satisfy the target specification, while completeness means that if a model is falsified, then it is guaranteed to violate the specification.} verification method for neural Lyapunov functions.

The key observation underlying our method, called HyParLyVe (pronounced ``Hyper-Live''), is that a single-layer neural network with ReLU activation functions induces a hyperplane arrangement that partitions the state space (as illustrated in Figure~\ref{fig:hidden_neurons}). Hyperplane arrangements have been used in neural network analysis as a measure of expressivity in networks containing only ReLU activation functions \cite{exp_measure,bent_hype}. To the best of our knowledge, hyperplane arrangements have only been used once for neural network verification and were only used for reachability analysis \cite{ferlez_reachability}.

Our contribution is a novel method that leverages the properties of hyperplane arrangements to determine whether a candidate neural network is Lyapunov over a user-defined region of interest. We derive and formally prove the validity of our proposed algorithm for verifying each Lyapunov condition. Numerical experiments demonstrate that our approach outperforms state-of-the-art neural network verification methods across several benchmarks. 

The organization of this paper is as follows. In Section~\ref{section: background} we introduce the setup of our problem and give a formal problem statement. In Section~\ref{section: verification}, we explain our proposed verification method in-depth and provide several theoretical results. We explain our implementation in Section~\ref{section: our_alg}, and run simulations and experiments in Section~\ref{section: simulations} against other recent methods. After an analysis of the results, we then conclude with Section~\ref{section: conclusion}.




\subsection{Notation}

Let $\mathbb{R}^n$ denote Euclidean space and $e_i$ the standard basis vectors. We denote $\mathbf{v}_i(\mathbf{x})$ as the $i$th element of a vector-valued function $\mathbf{v}(\mathbf{x})$. We also use $\hat{0}$ to denote the maximal element in a partially ordered set, which in our case corresponds to the ambient space $\mathbb{R}^p$. We use diag$(\cdot)$ to denote a diagonal matrix, $A^T$ to denote the transpose of a matrix $A$, and  $u(\cdot)$ to represent the unit step function. Also, $[n]$ denotes the set $\{1, 2, \dots, n\}$. The ReLU activation function is defined as $\sigma(x) = \text{max}(0, x)$. We denote the cartesian product $p$ times as $I^p$ for some interval $I$. The cardinality of a set is denoted as $|\cdot|$. Let $\overline{I}$ denote the closure of set~$I$. The $\mathrm{vert}(\cdot)$ function returns the set of vertices of given a region.  

\begin{figure}[htbp]
    \centering
    \resizebox{\columnwidth}{!}{
    \begin{tikzpicture}[
        neuron/.style={circle, draw, minimum size=1.4cm, inner sep=0pt, align=center, font=\small},
        input/.style={neuron, fill=blue!5},
        hidden/.style={neuron, fill=green!5},
        output/.style={neuron, fill=red!5},
        connection/.style={->, >=stealth, thin, gray!60}
    ]

        \node[input] (I-1) at (0,0) {$x_1$};
        \node[input] (I-2) at (0,-1.8) {$x_2$}; 
        \node at (0,-2.8) {$\vdots$}; 
        \node[input] (I-p) at (0,-4) {$x_p$};
        \node[above=0.2cm of I-1] {\textbf{Input ($p$)}};

        \node[hidden] (H-1) at (3,0.8) {\scalebox{0.6}{$\sigma(\mathbf{w}_1 \cdot \mathbf{x} + b_1)$}};
        \node[hidden] (H-2) at (3,-0.8) {\scalebox{0.6}{$\sigma(\mathbf{w}_2 \cdot \mathbf{x} + b_2)$}};
        \node[hidden] (H-3) at (3,-2.4) {\scalebox{0.6}{$\sigma(\mathbf{w}_3 \cdot \mathbf{x} + b_3)$}};
        \node at (3,-3.4) {$\vdots$}; 
        \node[hidden] (H-n) at (3,-4.5) {\scalebox{0.6}{$\sigma(\mathbf{w}_n \cdot \mathbf{x} + b_n)$}};
        \node[above=0.2cm of H-1] {\textbf{Hidden ($n$)}};

        \node[output] (O) at (6,-1.85) {\scalebox{0.8}{$V_{nn}(\mathbf{x})$}};
        \node[above=0.2cm of O] {\textbf{Output}};

        \foreach \i in {1,2,p}
            \foreach \h in {1,2,3,n}
                \draw[connection] (I-\i) -- (H-\h);

        \foreach \h in {1,2,3,n}
            \draw[connection] (H-\h) -- (O);

        \draw[connection] (I-1) -- (H-1) node [pos=0.2, above, sloped, black, font=\tiny] {$w_{1}$};
        \draw[connection] (I-1) -- (H-2) node [pos=0.2, above, sloped, black, font=\tiny] {$w_{2}$};
        \draw[connection] (I-1) -- (H-n) node [pos=0.1, above, sloped, black, font=\tiny] {$w_{n}$};
        \draw[connection] (I-p) -- (H-1) node [pos=0.2, above, sloped, black, font=\tiny] {$w_{(p-1)n + 1}$};
        \draw[connection] (I-p) -- (H-n) node [pos=0.2, below, sloped, black, font=\tiny] {$w_{pn}$};
        \draw[connection] (H-1) -- (O) node [pos=0.2, above, sloped, black, font=\tiny] {$w_{pn + 1}$};
        \draw[connection] (H-n) -- (O) node [pos=0.2, above, sloped, black, font=\tiny] {$w_{(p+1)n}$};

    \end{tikzpicture}
    } 
    
    \caption{A fully connected neural network with one hidden layer. There are $(p+1)n$ weights and $n+1$ biases. We define the $p \times 1$ vector as $\mathbf{w}_l = \begin{bmatrix}
        w_l & w_{l + n} & \dots & w_{l + (p-1)n}
    \end{bmatrix}^T$ and $\mathbf{x} = \begin{bmatrix}
        x_1 & x_2 \dots & x_p
    \end{bmatrix}^T$. }
    \label{fig:neural_network}
\end{figure}

\section{BACKGROUND}
\label{section: background}
We give the following definitions as supporting information for our theoretical results. These definitions have been adapted from \cite{stanley2007introduction}.
\begin{definition}[Arrangement of Hyperplanes]
Let $H_l = \mathbf{w}_l^T \mathbf{x} + b_l = 0$ represent a hyperplane in $\mathbb{R}^p$ where $\mathbf{w}_l, \mathbf{x} \in \mathbb{R}^p$ and $b_l \in \mathbb{R}$ for $l \in [n]$. Then $\mathcal{A} = \{H_1, H_2, \dots, H_n\}$ is an arrangement of $n$ hyperplanes. 
\label{def: arrangement}
\end{definition}

\begin{definition} [Region of a Hyperplane Arrangement]
    Given a hyperplane arrangement $\mathcal{A}$ in $\mathbb{R}^p$, a region $\mathcal{U}$ in $\mathbb{R}^p$ is defined as a maximal connected subset of $\mathbb{R}^p / \bigcup_{H \in \mathcal{A}} H$. Each $\mathcal{U}_t $, for $t \in [r(\mathcal{A})]$, is a convex region which could be bounded or unbounded. The number of regions produced by $\mathcal{A}$ is denoted as $r(\mathcal{A})$.
    \label{def: region}
\end{definition}

\subsection{Problem Statement}
Consider the $p$-dimensional continuous-time dynamical system
\begin{equation}
        \dot{\mathbf{x}} = f(\mathbf{x})  = \begin{bmatrix}
            f_1(\mathbf{x}) \\
            f_2(\mathbf{x}) \\
            \vdots \\
            f_p(\mathbf{x})
        \end{bmatrix},
        \label{eq: original_system}
\end{equation}
where $\mathbf{x} \in \mathbb{R}^p$ and $f:\mathbb{R}^p \to \mathbb{R}^p$ is continuous and defines the system dynamics.

The following definitions are used in the problem statement. 

\begin{definition}[Region of Interest] \label{def:roi_B}
Consider a hyperplane arrangement $\mathcal{A}$. We define the {region of interest}, denoted $\mathcal{B}$, as $\mathcal{B} = \bigcup_{t \in r(\mathcal{A})} (\mathcal{U}_t \cap \mathcal{K})$, for a compact polytope $\mathcal{K} \subset \mathbb{R}^p$. Then, $\mathcal{B}$ is a union of bounded convex polytope regions, $\overline{\mathcal{R}}_i$, for $i \in [r_{\mathcal{B}}(\mathcal{A})]$, where $r_{\mathcal{B}}(\mathcal{A})$ represents the number of these regions in $\mathcal{B}$. 
    
\end{definition}
\begin{figure}[t]
    \centering
    \includegraphics[width=0.8\linewidth]{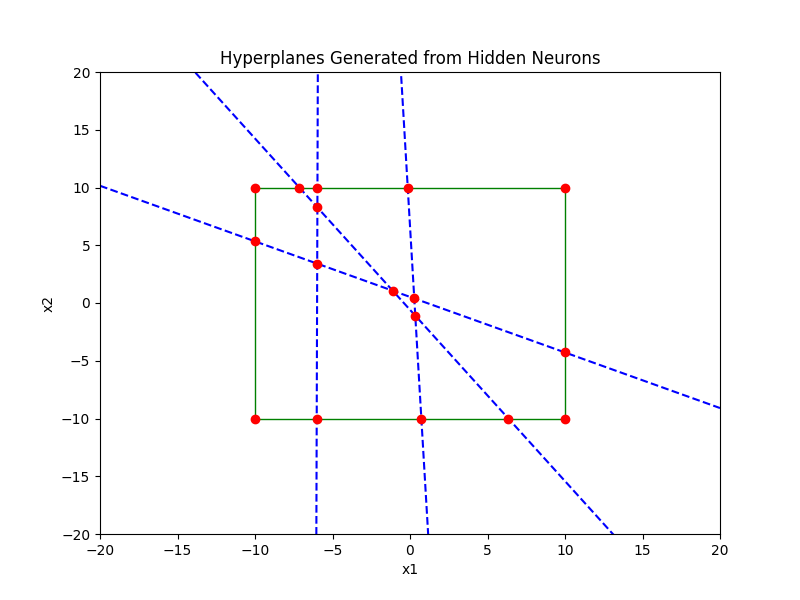}
    \caption{Generated hyperplanes from a neural network with four hidden neurons (blue). The green square is the region of interest $\mathcal{B}$, and the red dots are intersection points.}
    \label{fig:hidden_neurons}
\end{figure}
\begin{definition}[Local Lyapunov Function] \label{def:lyapunov} Consider the dynamical system in \eqref{eq: original_system}, with the origin as an equilibrium, and a region of interest $\mathcal{B} \subset \mathbb{R}^p$ with $\{0\}\subset \mathcal{B}$.
A local Lyapunov function is a scalar-valued function $V: \mathcal{B} \to \mathbb{R}_{\geq0}$ that satisfies the following properties: \begin{enumerate}[label=(\roman*)] \item $V(\mathbf{0}) = 0$ \label{cond_1}
\item $V(\mathbf{x}) > 0 $ for all $\mathbf{x} \in \mathcal{B}\setminus\{\mathbf{0}\}$ \label{cond_2} \item $\dot{V}(\mathbf{x}) = \nabla V(\mathbf{x}) \cdot f(\mathbf{x}) < 0$ for all $\mathbf{x} \in \mathcal{B}\setminus\{\mathbf{0}\}$. \label{cond_3} \end{enumerate} \end{definition}
%
    %
 Consider a single-hidden-layer ReLU neural network, $V_{nn}$, and a region of interest $\mathcal{B}$. We seek to solve the following problem. 
\begin{problem}
    Develop a method to determine if a candidate neural Lyapunov function meets Lyapunov conditions \ref{cond_1}, \ref{cond_2}, and \ref{cond_3} inside $\mathcal{B}$.  \label{prob: 1}
\end{problem}
Due to the many convenient properties of hyperplane arrangements, we present a hyperplane arrangement-based method as a solution to Problem~\ref{prob: 1}.
\begin{remark}[Piecewise-affine structure and closed regions]
Since $V_{nn}$ is a shallow ReLU network, it is affine on each region induced by the hyperplane arrangement $\mathcal{A}$, but it may fail to be differentiable on the hyperplanes in $\mathcal{A}$. Accordingly, throughout Sections~\ref{section: verification} and~\ref{section: our_alg}, statements involving $\nabla V_{nn}$ and $\dot V_{nn}$ are interpreted on
\[
\mathcal{B}\setminus \bigcup_{H\in\mathcal{A}} H.
\]
For each region $\mathcal{R}_i \subset \mathcal{B}$, let $\overline{\mathcal{R}}_i$ denote its closure relative to $\mathcal{K}$. Since $\mathcal{K}$ is a compact polytope and each region is cut out by affine halfspaces, $\overline{\mathcal{R}}_i$ is a compact convex polytope. In the positivity test below, vertices and extrema are taken over $\overline{\mathcal{R}}_i$.
\end{remark}




\section{Verification of Candidate Lyapunov Function}
\label{section: verification}
Consider a single-hidden-layer neural network with ReLU activation functions, $V_{nn}$, as a candidate Lyapunov function; see Figure~\ref{fig:neural_network}. The network has $p$ input neurons, $n$ hidden neurons, and a single output neuron. We define $w_k$ and $ b_r$ as the corresponding weights and biases of $V_{nn}$, where $k \in [(p+1)n]$, \mbox{$r \in [n+1]$}. 
Therefore, the output of $V_{nn}$ can be written as
\begin{equation}
\begin{aligned} \label{eq: V_nn-eq}
    V_{nn}(\mathbf{x}) ={} & w_{pn + 1} \sigma(\mathbf{w}_1 \cdot \mathbf{x} + b_1) \\
    & + w_{pn + 2} \sigma(\mathbf{w}_2 \cdot \mathbf{x} + b_2) + \cdots \\
    & + w_{pn + n} \sigma(\mathbf{w}_n \cdot \mathbf{x} + b_n) + b_{n+1},
\end{aligned}
\end{equation}
where, for each hidden neuron $l \in[n]$, we define the incoming weights as
\begin{equation}
    \mathbf{w}_l = \begin{bmatrix}
        w_l & w_{l + n} & \cdots & w_{l + (p-1)n}
    \end{bmatrix}^T. 
\end{equation}
The derivative of $V_{nn}$, with respect to each input $j\in [p]$, is 
\begin{equation} \label{eq: p_deriv}
\begin{aligned}
    \frac{\partial V_{nn}(\mathbf{x})}{\partial x_j} = {} & w_{pn+1} (\mathbf{w}_{1,j}) \sigma'(\mathbf{w}_1 \cdot \mathbf{x} + b_1) \\
    & + w_{pn+2} (\mathbf{w}_{2,j}) \sigma'(\mathbf{w}_2 \cdot \mathbf{x} + b_2) \\
    & + \cdots \\
    & + w_{pn+n} (\mathbf{w}_{n,j}) \sigma'(\mathbf{w}_n \cdot \mathbf{x} + b_n),
\end{aligned}
\end{equation}
where $\mathbf{w}_{l,j}$ represents the $j$th element of $\mathbf{w}_l$. 
We can rewrite this expression in matrix form:
\begin{equation}
    \frac{\partial V_{nn}(\mathbf{x})}{\partial x_j} = \mathbf{U}(\mathbf{x}) \cdot \mathbf{W} \tilde{\mathbf{w}}_j,
\end{equation}
where \begin{equation}
    \mathbf{U}(\mathbf{x}) = \begin{bmatrix}
        u(\mathbf{w}_1 \cdot \mathbf{x} + b_1) \\
        u(\mathbf{w}_2 \cdot \mathbf{x} + b_2) \\
        \vdots \\
        u(\mathbf{w}_n \cdot \mathbf{x} + b_n)
    \end{bmatrix}
    \label{eq: u_hype}
\end{equation}
is a binary vector containing only 1's or 0's; 
\begin{equation*}
    \mathbf{W} = \operatorname{diag}(w_{pn + 1}, \dots, w_{(p +1)n})
\end{equation*}
is a diagonal matrix of the weights from each hidden neuron to the output neuron;
and 
\begin{equation*}
    \tilde{\mathbf{w}}_j = \begin{bmatrix} w_{1 + (j-1)n} & w_{2 + (j - 1)n} & \dots & w_{n + (j - 1)n} \end{bmatrix}^T
\end{equation*}
is the vector of all the weights connected to the $j$th input neuron. 

Then, the gradient of $V_{nn}$ is
\begin{equation}
    \nabla V_{nn}(\mathbf{x}) = \begin{bmatrix}
        \mathbf{U}(\mathbf{x}) \cdot \mathbf{W} \tilde{\mathbf{w}}_1 \\ 
        \mathbf{U}(\mathbf{x}) \cdot \mathbf{W} \tilde{\mathbf{w}}_2
        \\
        \vdots \\ 
        \mathbf{U}(\mathbf{x}) \cdot \mathbf{W} \tilde{\mathbf{w}}_p
    \end{bmatrix}.
    \label{eq:nn_gradient}
\end{equation}

Examining~\eqref{eq: u_hype}, each expression $\mathbf{w}_l \cdot \mathbf{x} + b_l = 0$ defines a hyperplane with normal vector $\mathbf{w}_l$. Let $H_l$ denote the hyperplane associated with the $l$th hidden neuron. The set
\begin{equation}
\mathcal{A} = \{H_1,H_2,\dots,H_n\} \label{eq: hype_set}
\end{equation}
forms a hyperplane arrangement that partitions the state space into convex polytope regions. We are primarily interested in $\mathcal{B}$ and its corresponding the convex polytopes $\mathcal{R}_i$ (see Definition~\ref{def: region}). 

\begin{assumption}
    Assume we are given a  single-hidden-layer neural network with ReLU activation functions, as defined in \eqref{eq: V_nn-eq}, as a candidate Lyapunov function  for the dynamical system~\eqref{eq: original_system}. Further, assume that the region of interest, $\mathcal{B}$, is compact.
    \label{assump: 1}
\end{assumption}

Note that by Assumption~\ref{assump: 1}, the ReLU structure of the candidate induces a hyperplane arrangement as shown in $\eqref{eq: u_hype}$ and $\eqref{eq: hype_set}$. Then, $\mathcal{B}$ is composed of convex polytopes, denoted $\mathcal{R}_i$. In the following, we exploit this geometric structure to derive tests for each of the Lyapunov conditions in Definition~\ref{def:lyapunov}. 
\begin{proposition} 
     Suppose Assumption~\ref{assump: 1} holds. Let $\mathcal{R}_i$ be an arbitrary region created by the hyperplane arrangement. Then for all $\mathbf{x}, \mathbf{y} \in \mathcal{R}_i$, \mbox{$\nabla V_{nn}(\mathbf{x}) = \nabla V_{nn}(\mathbf{y})$}. 
    \label{constant_grad}
\end{proposition}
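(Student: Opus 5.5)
The plan is to show that the activation pattern $\mathbf{U}(\mathbf{x})$ in \eqref{eq: u_hype} is constant on $\mathcal{R}_i$. The gradient formula \eqref{eq:nn_gradient} then gives the result at once, because apart from $\mathbf{U}(\mathbf{x})$ it contains only the fixed parameters $\mathbf{W}$ and $\tilde{\mathbf{w}}_j$. By the Remark, $\mathcal{R}_i$ is understood as a region of $\mathcal{B}\setminus\bigcup_{H\in\mathcal{A}}H$. So for every $l\in[n]$ and every $\mathbf{x}\in\mathcal{R}_i$ we have $\mathbf{w}_l\cdot\mathbf{x}+b_l\neq 0$. In particular the gradient is well defined there, since $\sigma$ is differentiable away from $0$ with $\sigma'=u$.

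The key step is the following sign argument. Fix $l$ and take $\mathbf{x},\mathbf{y}\in\mathcal{R}_i$. Define $g_l(t)=\mathbf{w}_l\cdot((1-t)\mathbf{x}+t\mathbf{y})+b_l$ for $t\in[0,1]$. By Definition~\ref{def: region}, $\mathcal{R}_i$ is a convex region, being an intersection of open halfspaces with the convex polytope $\mathcal{K}$. Hence the segment from $\mathbf{x}$ to $\mathbf{y}$ lies in $\mathcal{R}_i$, and so it never meets $H_l$, which means $g_l(t)\neq0$ for all $t$. Since $g_l$ is affine and therefore continuous, the intermediate value theorem forces $\operatorname{sign} g_l(0)=\operatorname{sign} g_l(1)$. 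Connectedness alone would also suffice: $\mathcal{R}_i$ is a maximal connected subset avoiding $H_l$, and a continuous nonvanishing function on a connected set has constant sign. It follows that $u(\mathbf{w}_l\cdot\mathbf{x}+b_l)=u(\mathbf{w}_l\cdot\mathbf{y}+b_l)$ for every $l$, so $\mathbf{U}(\mathbf{x})=\mathbf{U}(\mathbf{y})$.

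Substituting into \eqref{eq:nn_gradient}, each component $\mathbf{U}(\mathbf{x})\cdot\mathbf{W}\tilde{\mathbf{w}}_j$ equals $\mathbf{U}(\mathbf{y})\cdot\mathbf{W}\tilde{\mathbf{w}}_j$, and therefore $\nabla V_{nn}(\mathbf{x})=\nabla V_{nn}(\mathbf{y})$. Equivalently, on $\mathcal{R}_i$ the network reduces to the affine map $\sum_{l:\,U_l=1} w_{pn+l}(\mathbf{w}_l\cdot\mathbf{x}+b_l)+b_{n+1}$.

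The only delicate point is the interpretation of $\mathcal{R}_i$. If $\mathcal{R}_i$ meant the closure $\overline{\mathcal{R}}_i$, points on a hyperplane $H_l$ would have $\sigma'$ undefined, or $u(0)$ would be assigned some convention. The claim would then fail there. I would therefore state explicitly that the proposition concerns the open region, in line with the Remark. Extending to the closure would require adopting one-sided gradients, which the proposition does not need.
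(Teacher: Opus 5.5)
Your proposal is correct and follows the paper's own argument: show that the activation pattern $\mathbf{U}(\mathbf{x})$ is constant on $\mathcal{R}_i$, then read off constancy of the gradient from \eqref{eq:nn_gradient}, where everything else is a fixed parameter. The difference is that you justify why the sign of each $\mathbf{w}_l\cdot\mathbf{x}+b_l$ is constant on $\mathcal{R}_i$ (convexity plus the intermediate value theorem, or connectedness), whereas the paper simply attributes this to the definition of a region.
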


\begin{proof} 
Fix a region $\mathcal{R}_i$. By definition of a region of the arrangement, for every hidden neuron $l \in [n]$, the sign of $\mathbf{w}_l^\top \mathbf{x} + b_l$ is constant for all $\mathbf{x} \in \mathcal{R}_i$. Hence the activation pattern vector $\mathbf{U}(\mathbf{x})$ from \eqref{eq: u_hype} is constant on $\mathcal{R}_i$. Since $\mathbf{W}$ and each $\tilde{\mathbf{w}}_j$ are constant, \eqref{eq:nn_gradient} implies that $\nabla V_{nn}(\mathbf{x})$ is constant on $\mathcal{R}_i$. Therefore, for all $\mathbf{x},\mathbf{y}\in\mathcal{R}_i$,
\[
\nabla V_{nn}(\mathbf{x}) = \nabla V_{nn}(\mathbf{y}).
\]
\hfill \qed

\end{proof}
We now present how the properties of the hyperplane arrangement induced by the hidden neurons lead to a verification method for candidate neural Lyapunov functions \mbox{(i.e., verifying all three conditions in Definition~\ref{def:lyapunov}).}

Verifying Condition~\ref{cond_1} is a trivial test at the origin:

\begin{test}[Zero at Origin — Condition~\ref{cond_1}]
    \label{test:cond_1}
    Evaluate $V_{nn}(\mathbf{0})$. If $V_{nn}(\mathbf{0}) \neq 0$, $\{\mathbf{0}\}$
    is a counterexample. Otherwise, Condition~\ref{cond_1} is satisfied.
\end{test}

\subsection{Verifying Positive Definiteness (Condition~\ref{cond_2})}

 This section will introduce how HyParLyVe can verify the positivity of a candidate neural Lyapunov function in a compact region of interest $\mathcal{B}$.

\begin{proposition}
    Suppose Assumption~\ref{assump: 1} holds. Consider an arbitrary region $\mathcal{R}_i$. If there exists $\mathbf{x}_0 \in \mathcal{R}_i$ such that $V_{nn}(\mathbf{x}_0) \leq 0$, then there exists a vertex $\mathbf{a}$ of $\overline{\mathcal{R}}_i$ such that $V_{nn}(\mathbf{a}) \leq 0$.
\end{proposition}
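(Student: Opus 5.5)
The plan is to use that $V_{nn}$ is affine on $\overline{\mathcal{R}}_i$ and that $\overline{\mathcal{R}}_i$ is a compact convex polytope, and then apply the fact that a linear function on a polytope attains its minimum at a vertex. For every hidden neuron $l$, the sign of $\mathbf{w}_l\cdot\mathbf{x}+b_l$ is constant on $\mathcal{R}_i$, as in the proof of Proposition~\ref{constant_grad}. Hence each term $\sigma(\mathbf{w}_l\cdot\mathbf{x}+b_l)$ equals either $\mathbf{w}_l\cdot\mathbf{x}+b_l$ or $0$ on $\mathcal{R}_i$. So there are $\mathbf{c}\in\mathbb{R}^p$ and $d\in\mathbb{R}$ with
\[
V_{nn}(\mathbf{x})=\mathbf{c}^T\mathbf{x}+d \quad \text{for all } \mathbf{x}\in\mathcal{R}_i,
\]
where $\mathbf{c}=\nabla V_{nn}$ is the constant gradient from Proposition~\ref{constant_grad}.

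Next I extend this identity to $\overline{\mathcal{R}}_i$. Both $V_{nn}$ (a composition of continuous maps) and $\mathbf{x}\mapsto\mathbf{c}^T\mathbf{x}+d$ are continuous, and they agree on $\mathcal{R}_i$. Therefore they agree on its closure. By the Remark, $\overline{\mathcal{R}}_i$ is a compact convex polytope. By Minkowski's theorem (Krein--Milman in finite dimensions), it equals the convex hull of its finitely many vertices $\mathbf{a}_1,\dots,\mathbf{a}_m$.

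Now take $\mathbf{x}_0\in\mathcal{R}_i$ with $V_{nn}(\mathbf{x}_0)\le 0$ and write $\mathbf{x}_0=\sum_k\lambda_k\mathbf{a}_k$ with $\lambda_k\ge0$ and $\sum_k\lambda_k=1$. By affinity,
\[
V_{nn}(\mathbf{x}_0)=\sum_k\lambda_k V_{nn}(\mathbf{a}_k)\ \ge\ \min_k V_{nn}(\mathbf{a}_k).
\]
So some vertex $\mathbf{a}=\mathbf{a}_k$ satisfies $V_{nn}(\mathbf{a})\le V_{nn}(\mathbf{x}_0)\le 0$, which proves the claim.

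The main point needing care is the passage from the open region to its closure. The vertices lie on hyperplanes in $\mathcal{A}$ or on the boundary of $\mathcal{K}$, not inside $\mathcal{R}_i$, and the affine formula is only derived on $\mathcal{R}_i$. Continuity of $V_{nn}$ handles this. I will also rely on the Remark's guarantee that $\overline{\mathcal{R}}_i$ is bounded, so it has vertices and equals their convex hull. Without this, an unbounded region could have no vertices and the argument would fail.
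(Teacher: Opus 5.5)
Your proof is correct and follows essentially the same route as the paper: $V_{nn}$ is affine on $\overline{\mathcal{R}}_i$ because the activation pattern is constant and $V_{nn}$ is continuous, and an affine function on a compact polytope attains its minimum at a vertex. The paper cites the Fundamental Theorem of Linear Programming for that last step, whereas you derive it from the convex-hull (Minkowski) representation; your direct argument also covers the non-strict case $V_{nn}(\mathbf{x}_0)\le 0$ cleanly, while the paper's contradiction argument is written only for $V_{nn}(\mathbf{x}_0)<0$.
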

\begin{proof}
    We proceed by contradiction. Suppose that there exists some $\mathbf{x}_0 \in \mathcal{R}_i$ such that $V_{nn}(\mathbf{x}_0) < 0$. Further, suppose that $V_{nn}$ is positive at all vertices. 

    Recall from Proposition~\ref{constant_grad} that the gradient is fixed throughout $\mathcal{R}_i$. By the continuity of ReLUs, $V_{nn}$ is affine in $\overline{\mathcal{R}}_i$. Then, by the Fundamental Theorem of Linear Programming \cite{dantzig2016linear}, it is known that a minimum occurs at some vertex~$\mathbf{a}$. Therefore, $\mathbf{x}_0$ is larger than the minimum, and~$V_{nn}(\mathbf{a})$ must also be negative. \hfill \qed 
\end{proof}



A direct consequence of the above proposition is that if a region $\mathcal{R}_i$ contains a point that violates Condition~\ref{cond_2}, then a vertex of the closed region $\overline{\mathcal{R}}_i$ must also violate Condition~\ref{cond_2}. Thus, positivity can be verified by checking the sign of $V_{nn}$ at every vertex of every closed region. 
The vertices of each region are simple to find, as each vertex represents the intersection points of the hyperplanes in $\mathcal{A}$ or intersections with the region of interest (as seen in Figure~\ref{fig:hidden_neurons}).
Thus, Condition~\ref{cond_2} in Definition~\ref{def:lyapunov} is verified using the following test:


\begin{test}[Positive Definiteness — Condition~\ref{cond_2}]
    \label{test:cond_2}
    If there exists a vertex $\mathbf{a} \in \bigcup_i \mathrm{vert}(\overline{\mathcal{R}}_i) \setminus \{\mathbf{0}\}$
    where $V_{nn}(\mathbf{a}) \leq 0$, $\mathbf{a}$ is a counterexample.
    Otherwise, Condition~\ref{cond_2} is satisfied.
\end{test}

We now move to the more complex scenario of verifying Condition~\ref{cond_3}. 

\subsection{Verifying Negative Lie Derivative (Condition~\ref{cond_3})}
In this section, we will show how HyParLyVe can verify Condition~\ref{cond_3} from Definition~\ref{def:lyapunov}. Recall that the Lie derivative of $V_{nn}$ along the system dynamics is given by
\[
\dot V_{nn}(\mathbf{x})=\nabla V_{nn}(\mathbf{x})\cdot f(\mathbf{x}).
\]

From Proposition~\ref{constant_grad}, the gradient is constant within each region $\mathcal{R}_i$. Consequently, any variation in the Lie derivative over a region is due solely to the system dynamics $f(\mathbf{x})$. 


Let $T^{(i)}_p$ be a local rotation in $\mathcal{R}_i$ such that
\begin{equation}
T^{(i)}_p(\nabla V_{nn}(\mathbf{x})) = a e_1,
\label{eq: rotation}
\end{equation}
for some $a>0$ and for all $\mathbf{x} \in \mathcal{R}_i$. Define
\begin{equation}
\mathbf{v}^{(i)}(\mathbf{x}) = T^{(i)}_p(f(\mathbf{x})).
\label{eq:rotated_dynamics_v}
\end{equation}
Note that $\mathbf{v}^{(i)}(\mathbf{x})$ is simply the rotated dynamics. We then define 
\begin{equation}
I_i(\mathbf{x}) =
\begin{cases}
0, & \mathbf{v}^{(i)}_1(\mathbf{x}) \in \mathbb{R}_{\geq 0}, \\
1, & \mathbf{v}^{(i)}_1(\mathbf{x}) \in \mathbb{R}_{<0}
\end{cases}
\label{eq:indicator}
\end{equation}
as the \textit{local indicator function} for region $\mathcal{R}_i$, where $\mathbf{v}^{(i)}_1(\mathbf{x})$ is the first element of $\mathbf{v}^{(i)}(\mathbf{x})$.  

\begin{lemma}
Suppose Assumption~\ref{assump: 1} holds. Let $\mathcal{R}_i$ be a region induced by the hyperplane arrangement $\mathcal{A}$. There exists an $\mathbf{x} \in \mathcal{R}_i$ such that $I_i(\mathbf{x}) = 0$  if and only if $\dot V_{nn}(\mathbf{x}) < 0$.
    \label{lemma: rotation}
\end{lemma}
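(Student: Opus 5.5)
The plan is to reduce everything to one identity: on $\mathcal{R}_i$, the Lie derivative is a fixed nonzero multiple of the first coordinate of the rotated dynamics. The equivalence in Lemma~\ref{lemma: rotation} then comes from reading off the sign of that coordinate against the definition of $I_i$ in \eqref{eq:indicator}.

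\textbf{Step 1 (constant gradient).} Fix $\mathcal{R}_i$. By Proposition~\ref{constant_grad}, $\nabla V_{nn}(\mathbf{x}) \equiv \mathbf{g}_i$ for all $\mathbf{x}\in\mathcal{R}_i$. Hence $\dot V_{nn}(\mathbf{x}) = \mathbf{g}_i\cdot f(\mathbf{x})$ on $\mathcal{R}_i$. This also shows the single rotation $T^{(i)}_p$ in \eqref{eq: rotation} is well defined for the whole region. The degenerate case $\mathbf{g}_i=\mathbf{0}$ should be handled separately. There no rotation with $a\neq 0$ exists and $\dot V_{nn}\equiv 0$, so I would note that \eqref{eq: rotation} implicitly assumes $\mathbf{g}_i\neq\mathbf{0}$.

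\textbf{Step 2 (orthogonal invariance).} Since $T^{(i)}_p$ is a rotation, it is orthogonal, so it preserves inner products:
\[
\dot V_{nn}(\mathbf{x}) = \mathbf{g}_i\cdot f(\mathbf{x}) = \bigl(T^{(i)}_p\mathbf{g}_i\bigr)\cdot\bigl(T^{(i)}_p f(\mathbf{x})\bigr) = (a e_1)\cdot \mathbf{v}^{(i)}(\mathbf{x}) = a\,\mathbf{v}^{(i)}_1(\mathbf{x}).
\]
Because $a\neq 0$ is a constant on $\mathcal{R}_i$, the sign of $\dot V_{nn}(\mathbf{x})$ is the sign of $\mathbf{v}^{(i)}_1(\mathbf{x})$ times $\operatorname{sign}(a)$. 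This holds pointwise, so the ``there exists $\mathbf{x}$'' quantifier in the statement passes through unchanged: any point at which one side holds is a point at which the other side holds.

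\textbf{Step 3 (matching with the indicator).} By \eqref{eq:indicator}, $I_i(\mathbf{x})=0$ exactly when $\mathbf{v}^{(i)}_1(\mathbf{x})\ge 0$. To conclude $I_i(\mathbf{x})=0 \iff \dot V_{nn}(\mathbf{x})<0$, the identity from Step 2 has to convert nonnegativity of $\mathbf{v}^{(i)}_1$ into negativity of $\dot V_{nn}$. This is the main obstacle, and I expect it to be where the argument breaks down.

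With $a>0$ as written, Step 2 gives $\mathbf{v}^{(i)}_1(\mathbf{x})\ge 0 \iff \dot V_{nn}(\mathbf{x})\ge 0$. That is the opposite pairing. So a proof of the statement as worded cannot go through with the literal sign conventions, because those conventions give the reverse pairing. The only way I see to obtain the stated pairing is to fix the rotation by the descent direction, $T^{(i)}_p(-\nabla V_{nn}) = a e_1$ with $a>0$, equivalently taking $a<0$ in \eqref{eq: rotation}. Then $\dot V_{nn} = -|a|\,\mathbf{v}^{(i)}_1$.

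Even under that convention, there is a further boundary problem. The cases $\mathbf{v}^{(i)}_1\ge 0$ give $\dot V_{nn}\le 0$, not $\dot V_{nn}<0$, so the points with $\mathbf{v}^{(i)}_1=0$ would still have to be excluded. Matching $I_i(\mathbf{x})=0$ exactly with $\dot V_{nn}<0$ would need the strict and non-strict inequalities in \eqref{eq:indicator} swapped as well. Under those two convention adjustments, Steps 1--2 would yield the statement immediately. Without them, they yield only its complement, namely $I_i(\mathbf{x})=1 \iff \dot V_{nn}(\mathbf{x})<0$.
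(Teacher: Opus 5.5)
Your diagnosis is right, and your Steps 1--2 are the paper's argument. The paper also uses invariance of the dot product under $T^{(i)}_p$ to get $\dot V_{nn}(\mathbf{x}) = a\,\mathbf{v}^{(i)}_1(\mathbf{x})$ with $a>0$. Its proof then concludes $\dot V_{nn}(\mathbf{x})\ge 0 \Rightarrow I_i(\mathbf{x})=0$ and $\dot V_{nn}(\mathbf{x})<0 \Rightarrow I_i(\mathbf{x})=1$ (``as desired''). That is exactly the complement $I_i(\mathbf{x})=1 \iff \dot V_{nn}(\mathbf{x})<0$ that you derive, so the ``$I_i(\mathbf{x})=0$'' in the statement is a typo.

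The fix belongs in the statement, not in the conventions you propose to change (rotating $-\nabla V_{nn}$ and swapping the inequalities in \eqref{eq:indicator}). Theorem~\ref{thm: optimize} and Test~\ref{test:cond_3} rely on the $a>0$ pairing, in which $\mathbf{v}^{(i)}_1(\mathbf{x})\ge 0$ flags a violation. Your remark that \eqref{eq: rotation} tacitly requires $\nabla V_{nn}\neq\mathbf{0}$ on $\mathcal{R}_i$ is a fair point the paper leaves implicit.
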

\begin{proof} 
    Suppose $\dot{V}_{nn}(\mathbf{x}) \geq 0$ for all $\mathbf{x} \in \mathcal{R}_i$. Let $T^{(i)}_p$ be as defined in \eqref{eq: rotation}. Since the dot product is invariant under rotation,
    $$ \nabla V(\mathbf{x}) \cdot f(\mathbf{x}) =  T^{(i)}_p(\nabla V(\mathbf{x})) \cdot T^{(i)}_p(f(\mathbf{x})) \geq 0.$$
    From \eqref{eq: rotation},
    $$ T^{(i)}_p(\nabla V(\mathbf{x})) \cdot T^{(i)}_p(f(\mathbf{x})) = a\mathbf{v}^{(i)}_1(\mathbf{x}) \geq 0, $$
    and thus $I_i(\mathbf{x}) = 0$ from \eqref{eq:indicator} since $a > 0$. 

    Now suppose that there exists an $\mathbf{x} \in \mathcal{R}_i$ such that $\dot{V}_{nn}(\mathbf{x}) < 0$. Similar to the forward direction, $\mathbf{v}^{(i)}_1(\mathbf{x}) < 0$. By definition of the indicator function in \eqref{eq:indicator}, $I_i(\mathbf{x}) = 1$ as desired.
    \hfill \qed 
\end{proof}
Recall that a negative dot product means that two vectors are obtuse from each other. From Lemma~\ref{lemma: rotation}, we see that once the gradient and the dynamics vector is rotated by the same transformation to make the gradient aligned with positive $e_1$, it becomes simple to determine whether the two vectors are obtuse. If the original vectors are obtuse from each other, then the rotated dynamics vector must have a component in the negative $e_1$ direction. If the rotated dynamics vector has a component in the positive $e_1$ direction, then it is acute from the gradient and it becomes a violation of Condition~\ref{cond_3}. 

Thus, the problem of finding a counterexample to Condition~\ref{cond_3} is reduced to checking if the rotated dynamics vector has a component in the direction of positive $e_1$ or not. In practice, finding $T^{(i)}_p$ is simple to do computationally, as $T^{(i)}_p$ can be found using the Householder Transform~\cite{golub2013matrix}. 
We illustrate this concept with an example.

\begin{figure}[t]
    \centering
    \newcommand{\myScale}{0.6} 
    
    \begin{subfigure}[b]{0.48\columnwidth}
        \centering
        \begin{tikzpicture}[>=Stealth, scale=\myScale]
            \useasboundingbox (-2.8,-1.5) rectangle (2.8,3.2);
            
            \draw[->] (-2.2,0) -- (2.5,0);
            \draw[->] (0,-1) -- (0,2.8);
            
            \draw[dashed] (130:2.5) -- (-50:1.5);
            
            \draw[->, thick] (0,0) -- (40:2.2) node[right] {\scriptsize $\nabla V_{nn}$};
            \draw[->, thick] (0,0) -- (100:2.3) node[above] {\scriptsize $f$};
            \draw[->, thick] (0,0) -- (155:2.5) node[left] {\scriptsize $f'$};
        \end{tikzpicture}
        \caption{Original system.}
        \label{fig:original}
    \end{subfigure}
    \hfill 
    \begin{subfigure}[b]{0.48\columnwidth}
        \centering
        \begin{tikzpicture}[>=Stealth, scale=\myScale]
            \useasboundingbox (-2.8,-1.5) rectangle (2.8,3.2);
            
            \draw[->] (-2.2,0) -- (2.5,0);
            \draw[->] (0,-1) -- (0,2.8);
            
            \draw[dashed] (90:2.5) -- (-90:1.5);
            
            \draw[->, thick] (0,0) -- (0:2.2) node[below] {\scriptsize $T^{(i)}_2(\nabla V_{nn})$};
            \draw[->, thick] (0,0) -- (60:2.3) node[above right] {\scriptsize $T^{(i)}_2(f)$};
            \draw[->, thick] (0,0) -- (115:2.5) node[above left] {\scriptsize $T^{(i)}_2(f')$};
        \end{tikzpicture}
        \caption{Rotated system.}
        \label{fig:aligned}
    \end{subfigure}

    \caption{Here $f$ and $f'$ represent possible dynamics vectors. $T^{(i)}_2$ is the 2D rotation map. This illustration shows that through rotation the sign of the first element of the rotated dynamics vector is sufficient when certifying a point in $\mathcal{R}_i$. } 
    \label{fig:vector_alignment}
\end{figure}

Let $p = 2$, and consider Figure~\ref{fig:vector_alignment} where $f$ represents the dynamics vector. Let $f'$ be another possible dynamics vector for comparison. In Figure~\ref{fig:original}, note that while $f$ and $f'$ are both in the same quadrant and have the same sign for each element, only $f'$ is obtuse from the gradient. Thus, it is impossible to distinguish $f$ and $f'$ from each other in terms of certifiability by looking at the current quadrant. In other words, the direction still matters. 

Now consider Figure~\ref{fig:aligned}. The depicted system is the same from Figure~\ref{fig:original}, except the dynamics vectors have been rotated by the angle that makes the gradient in line with the x-axis. The rotation is the transformation from \eqref{eq: rotation}, and for $p = 2$ it is known that $T^{(i)}_2(\cdot)$ is a rotation matrix. It can be observed that the rotation now allows for $f$ and $f'$ to be distinguishable by the quadrant they lie in. In fact, as stated in Lemma~\ref{lemma: rotation}, checking the value of $I_i$ is sufficient to certify points inside $\mathcal{R}_i$. 

To guarantee that a counterexample is detected if it exists, we provide the following theorem. 
\begin{theorem}
        Suppose Assumption~\ref{assump: 1} holds. 
      There exists an $\mathbf{x} \in \mathcal{B}$ such that $\nabla V_{nn}(\mathbf{x}) \cdot f(\mathbf{x}) \geq 0$ if and only if there exists an $i_0$ such that $\max_{\mathbf{x} \in \mathcal{R}_{i_0}} \mathbf{v}^{(i_0)}_1(\mathbf{x}) \geq 0$.
    \label{thm: optimize}
\end{theorem}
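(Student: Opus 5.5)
The plan is to reduce the statement to a pointwise identity and then pass from points to region-wise maxima. Everything is interpreted, as in the Remark on piecewise-affine structure, on $\mathcal{B}\setminus\bigcup_{H\in\mathcal{A}}H$, so that $\mathcal{B}$ is covered by the open regions $\mathcal{R}_i$, $i\in[r_{\mathcal{B}}(\mathcal{A})]$. The key identity, already implicit in the proof of Lemma~\ref{lemma: rotation}, is
\begin{equation*}
\nabla V_{nn}(\mathbf{x})\cdot f(\mathbf{x}) = a_i\,\mathbf{v}^{(i)}_1(\mathbf{x}) \quad \text{for all } \mathbf{x}\in\mathcal{R}_i ,
\end{equation*}
with $a_i>0$. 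I will derive it first.

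To obtain the identity, fix a region $\mathcal{R}_i$. By Proposition~\ref{constant_grad}, $\nabla V_{nn}$ is a constant vector $g_i$ on $\mathcal{R}_i$. The map $T^{(i)}_p$ from \eqref{eq: rotation} is an orthogonal transformation, such as a Householder reflection or a rotation, with $T^{(i)}_p g_i = a_i e_1$. Since orthogonal maps preserve inner products,
\begin{equation*}
g_i\cdot f(\mathbf{x}) = (T^{(i)}_p g_i)\cdot (T^{(i)}_p f(\mathbf{x})) = a_i e_1\cdot \mathbf{v}^{(i)}(\mathbf{x}) = a_i \mathbf{v}^{(i)}_1(\mathbf{x}).
\end{equation*}
One degenerate case needs a remark: if $g_i=0$, no $a_i>0$ exists. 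In that case $\dot V_{nn}\equiv 0$ on $\mathcal{R}_i$, so the region is automatically a violation. I will adopt the convention that $\mathbf{v}^{(i)}_1\equiv 0$ there, which keeps both sides of the theorem consistent. I also need the value $0$ to be attained, so I will either exclude the origin explicitly or note that the statement concerns $\mathcal{B}\setminus\{\mathbf{0}\}$, as in Condition~\ref{cond_3}.

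For the forward direction, suppose some $\mathbf{x}^*\in\mathcal{B}$ satisfies $\nabla V_{nn}(\mathbf{x}^*)\cdot f(\mathbf{x}^*)\ge 0$. Then $\mathbf{x}^*$ lies in some $\mathcal{R}_{i_0}$, and the identity gives $\mathbf{v}^{(i_0)}_1(\mathbf{x}^*)\ge 0$. Hence $\sup_{\mathcal{R}_{i_0}}\mathbf{v}^{(i_0)}_1 \ge 0$.

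For the converse, suppose $\max_{\mathcal{R}_{i_0}}\mathbf{v}^{(i_0)}_1\ge 0$. If the maximum is attained at some $\mathbf{x}^*\in\mathcal{R}_{i_0}$, the identity immediately gives $\dot V_{nn}(\mathbf{x}^*)\ge 0$, so $\mathbf{x}^*$ is the required point.

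The main obstacle is that $\mathcal{R}_{i_0}$ is open, so the maximum may not be attained. A natural fix is to maximize over the compact set $\overline{\mathcal{R}}_{i_0}$, on which the continuous $\mathbf{v}^{(i_0)}_1$ attains its maximum by Weierstrass. The maximizer could then lie on a hyperplane, where $\nabla V_{nn}$ is undefined. My approach is to read the statement as a supremum over $\mathcal{R}_{i_0}$ and then split on whether that supremum is attained.
\begin{itemize}
\item If the supremum is attained, the argument above applies directly.
\item If the supremum is exactly $0$ and not attained, then $\dot V_{nn}<0$ throughout the open region, so no interior witness exists.
\end{itemize}
In the second case I will either state that the witness is the boundary limit point, using the one-sided gradient $g_{i_0}$ (the natural reading, since the algorithm works with closed regions), or replace ``$\ge 0$'' by ``$>0$''. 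With the strict inequality, a supremum $>0$ forces some interior point with $\mathbf{v}^{(i_0)}_1>0$, and the equivalence holds cleanly. I will state this boundary convention explicitly in the proof.
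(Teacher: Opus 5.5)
Your argument is essentially the paper's proof: orthogonality of $T^{(i)}_p$ gives $\nabla V_{nn}\cdot f = a\,\mathbf{v}^{(i)}_1$ pointwise on each region, and both directions then follow immediately. This identity is the content of Lemma~\ref{lemma: rotation}, which you rederive instead of citing.

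Your extra caveats are correct refinements, not gaps. The paper's converse (``for that $\mathbf{x}$'') silently assumes the maximum over the open region $\mathcal{R}_{i_0}$ is attained, and it never treats the case $\nabla V_{nn}=0$, where no $T^{(i)}_p$ with $a>0$ exists. Your closure/one-sided-gradient reading matches Algorithm~\ref{alg:nnv}, which maximizes under the closed constraints $A\mathbf{x}\le b$. If you exclude $\mathbf{0}$, exclude it from both sides, since $\mathbf{v}^{(i)}_1(\mathbf{0})=0$ whenever $\mathbf{0}$ lies in a region.
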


\begin{proof}
    Consider the region $\mathcal{R}_{i_0}$. Assume that there is an $\mathbf{x} \in \mathcal{R}_{i_0}$ that does not satisfy Condition~\ref{cond_3} from Definition~\ref{def:lyapunov}, i.e., \mbox{$\nabla V_{nn}(\mathbf{x}) \cdot f(\mathbf{x}) \geq 0$}. Recall that $T^{(i_0)}_p$ is a rotation that makes $\nabla V_{nn}$ a positive scalar multiple of $e_1$. By Lemma~\ref{lemma: rotation}, we know that $\mathbf{v}^{(i_0)}_1(\mathbf{x}) \geq 0$. Therefore, $\max_{\mathbf{x} \in \mathcal{R}_{i_0}} \mathbf{v}^{(i_0)}_1(\mathbf{x}) \geq 0$. 

    Now assume that $\max_{\mathbf{x} \in \mathcal{R}_{i_0}} \mathbf{v}^{(i_0)}_1(\mathbf{x}) \geq 0$ for some $\mathbf{x} \in \mathcal{R}_i$. Then clearly for that $\mathbf{x}$ we know that $\mathbf{v}^{(i_0)}_1(\mathbf{x}) \geq 0$. Applying Lemma~\ref{lemma: rotation}, $\dot{V}_{nn}(\mathbf{x}) \geq 0$ and does not satisfy Condition~\ref{cond_3}.  
    \hfill \qed 
\end{proof}

Thus, Theorem~\ref{thm: optimize} yields the following test to be implemented to verify Condition~\ref{cond_3}.

\begin{test}[Negative Lie Derivative — Condition~\ref{cond_3}]
    \label{test:cond_3}
    If there exists a region $\mathcal{R}_i$  with $T^{(i)}_p(\nabla V_{nn}(\mathbf{x})) = a e_1$ for some $ \mathbf{x}\in \mathcal{R}_i$
    and $\mathbf{x}^* = \arg\max_{\mathbf{x} \in \mathcal{R}_i} \mathbf{v}^{(i)}_1(\mathbf{x})$
    with $ \mathbf{v}^{(i)}_1(\mathbf{x}^*) \geq 0$, return $\mathbf{x}^*$ as a counterexample.
    Otherwise, Condition~\ref{cond_3} is satisfied. 
\end{test}

We thus develop an algorithm that uses Test~\ref{test:cond_1}, ~\ref{test:cond_2}, and ~\ref{test:cond_3} to determine if a candidate neural Lyapunov function is indeed Lyapunov (shown in ~\ref{alg:nnv}).

\begin{algorithm}
\caption{HyParLyVe: Neural Lyapunov Verification}\label{alg:nnv}
\begin{algorithmic}[1]
\Require Lyapunov network $V_{nn}(\mathbf{x})$, dynamics $f(\mathbf{x})$, compact domain $\mathcal{K}\subset \mathbb{R}^p$, with $\mathbf{x}\in \mathbb{R}^p$
\Ensure Pass/fail with counterexamples

\State $S \leftarrow \emptyset$
\If{$V_{nn}(\mathbf{0}) \neq 0$} \Comment{Test \ref{test:cond_1}}
    \State $S \leftarrow S \cup \{\mathbf{0}\}$
\EndIf

\State $\mathcal{A} \leftarrow \textsc{HyperplaneArrangement}(V_{nn})$ \Comment{Def. \ref{def: arrangement}}
\State $\{\mathcal{B}\} \leftarrow \textsc{EnumerateRegions}(\mathcal{A},\ \mathcal{K})$ \label{alg:nnv:line:enumerate}
\Comment{Def. \ref{def:roi_B}}

\For{each $\mathcal{R}_i \in \mathcal{B}$}
    \For{each vertex $\mathbf{x} \in \bigcup_i \mathrm{vert}(\mathcal{R}_i)\setminus \{\mathbf{0}\}$}
        \If{$V_{nn}(\mathbf{x}) \leq 0$} \Comment{Test \ref{test:cond_2} }
            \State $S \leftarrow S \cup \mathbf{x}$
        \EndIf
    \EndFor
\EndFor

\For{each $\mathcal{R}_i \in \mathcal{B}$} \Comment{Test \ref{test:cond_3}}
    \State $\mathbf{c}_i \leftarrow \textsc{GetCentroid}(\mathcal{R}_i)$ 
    \State $a e_1 \leftarrow T^{(i)}_p(\nabla V_{nn}(\mathbf{c}_{i}))$ \Comment{Eq. \eqref{eq: rotation}}
    \State $\mathbf{x}^* \leftarrow \arg\max_{\mathbf{x} \in \mathcal{R}_i} \mathbf{v}^{(i)}_1(\mathbf{x})$\label{alg:line:argmaxfx}
    \If{$\mathbf{v}^{(i)}_1(\mathbf{x}^*) \geq 0$}
        \State  $S \leftarrow S \cup \mathbf{x}^*$
    \EndIf
\EndFor

\If{$S = \emptyset$}
    \State  \Return \textsc{Verified}
\Else
    \State \Return $S$
\EndIf
    
\end{algorithmic}
\end{algorithm}

\section{Verification Algorithm}
\label{section: our_alg}

Our solution to Problem~\ref{prob: 1} is expressed in Algorithm~\ref{alg:nnv}, which requires a candidate Lyapunov network $V_{nn}(\mathbf{x})$, dynamics $f(\mathbf{x})$, and compact domain $\mathcal{K}\subset \mathbb{R}^p$, with $\mathbf{x}\in \mathbb{R}^p$. The algorithm returns \textsc{Verified} if and only if Condition \ref{cond_1}, \ref{cond_2}, and \ref{cond_3} from Definition~\ref{def:lyapunov} hold; otherwise, a set $S$ of counterexamples is returned.

\begin{figure}[t]
    \centering
    \begin{subfigure}[b]{0.45\columnwidth}
        \centering
        \begin{tikzpicture}[scale=0.8]
            \draw[thick] (-2,1) -- (2,1) node[right] {$H_a$};    
            \draw[thick] (-2,-1) -- (2,-1) node[right] {$H_c$};  
            \draw[thick] (-1.5,-2) -- (1.5,2) node[above] {$H_b$}; 
            
            \filldraw (0.75,1) circle (2pt) node[above left] {$X$}; 
            \filldraw (-0.75,-1) circle (2pt) node[below right] {$Y$}; 
            
            \node at (-1.8,1.8) {$\mathbb{R}^2$};
        \end{tikzpicture}
        \caption{Arrangement $\mathcal{A}$}
        \label{fig:arrangement_example}
    \end{subfigure}
    \hfill
    \begin{subfigure}[b]{0.45\columnwidth}
        \centering
        \begin{tikzpicture}[scale=1.2]
            \node (R2) at (0,0) {$\mathbb{R}^2$};  
            \node (a) at (-1,1) {$H_a$};             
            \node (b) at (0,1) {$H_b$};
            \node (c) at (1,1) {$H_c$};
            \node (X) at (-0.5,2) {$X$};           
            \node (Y) at (0.5,2) {$Y$};
            
            \draw (R2) -- (a);
            \draw (R2) -- (b);
            \draw (R2) -- (c);
            \draw (a) -- (X);
            \draw (b) -- (X);
            \draw (b) -- (Y);
            \draw (c) -- (Y);
        \end{tikzpicture}
        \caption{Intersection Poset $L(\mathcal{A})$}
        \label{fig:hasse}
    \end{subfigure}
    
    \caption{A hyperplane arrangement and its Hasse diagram (of the intersection poset). The diagram is built by dimension, placing highest-dimensional elements at the bottom. Since $\dim(\mathbb{R}^2)=2$, it appears first. Each hyperplane has dimension 1 and lies in $\mathbb{R}^2$, and edges indicate containment. Finally, the intersection points (dimension 0) are added: $X \subset H_a, H_b$ gives edges from $X$ to $H_a, H_b$, and $Y \subset H_b, H_c$ gives edges from $Y$ to $H_b, H_c$.}
    \label{fig:polynomial_example}
\end{figure}

Our algorithm is somewhat unique in that instead of returning only the first identified counterexample, we return a representative set $S$ that contains at least one counterexample for every invalid region.
Since counterexamples are tied to specific polytopes spread throughout $\mathcal{B}$, $S$ is likely to contain a more representative set of Lyapunov network violations, which can assist in retraining.

 We implement the neural networks and dynamics in PyTorch~\cite{ansel2024pytorch}.
 We iterate through the regions on line 6 using methods discussed in Appendix~\ref{appendix: enumerate_regions}.
 The optimization problem from line 17 is solved using SHGO, as further discussed in Appendix~\ref{appendix: rotated_dynamics_max}.

We now explore the complexity of our algorithm, which is dependent on the number of regions produced by a hyperplane arrangement. Thus, the following definitions act as supporting information for the next theoretical result. These definitions are mainly adapted from \cite{stanley2007introduction}. Recall that $r(\mathcal{A})$ denotes the number of regions produced from a hyperplane arrangement $\mathcal{A}$, and $r_{\mathcal{B}}(\mathcal{A})$ denotes the number of polytopes contained in $\mathcal{B}$.
\begin{definition}[Intersection Poset]
    The intersection poset $L(\mathcal{A})$ is defined as the partially ordered set of nonempty intersections of a given hyperplane arrangement $\mathcal{A}$. It is a set of possible intersections from all possible combinations of elements in $\mathcal{A}$. It is ordered by reverse inclusion, i.e., $B \subseteq A \implies A \leq B$ for sets $A$ and $B$. 
    \label{def: poset}
\end{definition}

\begin{definition}[M\"obius Function]
Given a hyperplane arrangement $\mathcal{A}$ and intersection poset $L(\mathcal{A})$, the M\"obius Function $\mu: L(\mathcal{A}) \to \mathbb{Z}$ has the following properties:
\begin{enumerate}[label=(\roman*)] 
    \item $\mu(\hat{0}) = 1$
    \item $\mu(y) = - \sum_{x < y}\mu(x)$, for all $x, y \in L(\mathcal{A}), y \neq \hat{0}$. 
\end{enumerate}
\label{def:mobius}
\end{definition}

\begin{table}
\centering
\caption{The first three columns are in the following order: number of hidden neurons, dimension of the state space, and number of polytopes in $\mathcal{B}$. The last column is the upper bound calculated from Lemma~\ref{lem_upper_bound}. The region of interest in each row was defined as $\mathcal{B} = [-10, 10]^p$. }
\label{tab:cubic-decrease}
\begin{tabular}{@{}ccrr@{}}
    \toprule
    $n$ & $p$ & $r_{\mathcal{B}}(\mathcal{A})$ & Upper Bound  \\
    \midrule
      4 &  2 &    4 &    7 \\
      6 &  3 &    8 &   15 \\
      8 &  4 &   16 &   31 \\
     10 &  5 &   32 &   63 \\
     12 &  6 &   64 &  127 \\
     14 &  7 &  128 &  255 \\
     16 &  8 &  256 &  511 \\
     18 &  9 &  512 & 1023 \\
     20 & 10 & 1024 & 2047 \\
    \bottomrule
    \end{tabular}
\end{table}

\begin{definition}[Characteristic Polynomial]
    Given a hyperplane arrangement $\mathcal{A}$ and its corresponding intersection poset $L(\mathcal{A})$, the characteristic polynomial is defined as
    \begin{equation}
        \chi_{\mathcal{A}}(t) = \sum_{x \in L(\mathcal{A})} \mu(x)t^{\dim(x)} 
    \end{equation}
    where $\dim(x)$ is the dimension of an element of the poset. 
    \label{def:char_poly}
\end{definition}

To illustrate the computation of the characteristic polynomial, we provide a simple example. Consider the hyperplane arrangement given in Figure~\ref{fig:arrangement_example}. The hyperplanes are $H_a, H_b$ and $H_c$ with intersections $X$ and $Y$. The intersection poset is $L(\mathcal{A}) = \{\mathbb{R}^2, H_a, H_b, H_c, X, Y\}$. As $\mathbb{R}^2$ is the largest space in the poset, reverse inclusion causes it to be the minimum (i.e., all other poset elements are contained in $\mathbb{R}^2$). 

To find the characteristic polynomial, we must first find the value of the M\"obius function at each element in the intersection poset. Since $\mathbb{R}^2$ is the minimum, $\mu(\mathbb{R}^2) = 1$. For each hyperplane $H_i$, where $i \in \{a, b, c\}$, the only element less than it is $\mathbb{R}^2$. Thus, $\mu(H_a) = \mu(H_b) = \mu(H_c) = -\mu(\mathbb{R}^2) = -1$. Since $X$ is the intersection of $H_a$ and $H_b$, we know that $X \subseteq H_a$ and $X \subseteq H_b$, and by reverse inclusion $X \geq H_a, H_b$. Therefore, $\mu(X) = -(\mu(H_a) + \mu(H_b) + \mu(\mathbb{R}^2)) = 1$. Likewise, $\mu(Y) = 1$. Note that the M\"obius function at each element in the intersection poset can be computed recursively by summing up the values of the M\"obius function values of the elements connected from below and flipping the sign in the Hasse diagram showed in Figure~\ref{fig:hasse}.

Referring to the definition of the characteristic polynomial, we then get that $\chi_{\mathcal{A}}(t) = 1 \cdot t^2 + (-1 -1 -1) \cdot t^1 + (1 + 1) \cdot t^0 = t^2 - 3t + 2$. This polynomial will be useful when we count the regions produced by a hyperplane arrangement. 

We lastly describe how the number of regions grows with the dimension of the state space $p$. 
\begin{lemma}
    Consider a candidate neural network Lyapunov function $V_{nn}(\mathbf{x})$ and a region of interest $\mathcal{B}$. Then the number of regions is bounded:
    \begin{equation}
        r_{\mathcal{B}}(\mathcal{A}) \leq (-1)^p \chi_{\mathcal{A}}(-1). 
    \end{equation}
    \label{lem_upper_bound}
\end{lemma}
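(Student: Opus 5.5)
The plan is to combine Zaslavsky's theorem with a simple injection from regions inside $\mathcal{B}$ to regions of the full arrangement. The argument has three steps.

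\textbf{Step 1 (Zaslavsky's theorem).} For any arrangement $\mathcal{A}$ of affine hyperplanes in $\mathbb{R}^p$, the number of regions of $\mathbb{R}^p\setminus\bigcup_{H\in\mathcal{A}}H$ is
\[
r(\mathcal{A}) = (-1)^p\chi_{\mathcal{A}}(-1) = \sum_{x\in L(\mathcal{A})}|\mu(x)|,
\]
where $\chi_{\mathcal{A}}$ is the characteristic polynomial of Definition~\ref{def:char_poly}, built from the M\"obius function of Definition~\ref{def:mobius}. I will invoke this as a standard result from \cite{stanley2007introduction}, which is where the paper's definitions of $L(\mathcal{A})$, $\mu$ and $\chi_{\mathcal{A}}$ come from. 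If a self-contained argument is wanted, the proof is by deletion--restriction. Fix $H\in\mathcal{A}$, let $\mathcal{A}'=\mathcal{A}\setminus\{H\}$, and let $\mathcal{A}''$ be the arrangement that $\mathcal{A}'$ induces on $H$. Then:
\begin{itemize}
\item regions satisfy $r(\mathcal{A})=r(\mathcal{A}')+r(\mathcal{A}'')$, since each region of $\mathcal{A}''$ is exactly where $H$ splits one region of $\mathcal{A}'$ into two;
\item characteristic polynomials satisfy $\chi_{\mathcal{A}}(t)=\chi_{\mathcal{A}'}(t)-\chi_{\mathcal{A}''}(t)$.
\end{itemize}
Evaluating the second identity at $t=-1$ and inducting on $|\mathcal{A}|$ gives the formula.

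\textbf{Step 2 (injection into regions of $\mathcal{A}$).} By Definition~\ref{def:roi_B}, every region $\mathcal{R}_i$ of $\mathcal{B}$ has the form $\mathcal{U}_t\cap\mathcal{K}$ for some region $\mathcal{U}_t$ of $\mathcal{A}$. I will assign to each nonempty $\mathcal{R}_i$ its parent $\mathcal{U}_t$. This assignment is injective because distinct regions $\mathcal{U}_t$ are disjoint. The one point that needs care is that each $\mathcal{U}_t\cap\mathcal{K}$ counts as a single region, and it does: both $\mathcal{U}_t$ and the polytope $\mathcal{K}$ are convex, so their intersection is convex and hence connected. Therefore
\[
r_{\mathcal{B}}(\mathcal{A}) \le r(\mathcal{A}) = (-1)^p\chi_{\mathcal{A}}(-1).
\]

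\textbf{Step 3 (edge case).} If $\mathcal{A}$ does not span $\mathbb{R}^p$, meaning the normals $\mathbf{w}_l$ span a proper subspace, Zaslavsky's formula still holds with the same $\chi_{\mathcal{A}}$ and the same power of $t$. So no special handling is needed beyond noting that the standard theorem covers non-essential arrangements.

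The main obstacle is Step 1, specifically verifying the deletion--restriction identity for $\chi$ in the affine, non-central setting. I would rely on the standard reference rather than redo that M\"obius-function computation. A quick sanity check is the worked example above: $\chi_{\mathcal{A}}(t)=t^2-3t+2$ gives $\chi_{\mathcal{A}}(-1)=6$, and three lines with two crossing points do indeed cut the plane into $6$ regions.
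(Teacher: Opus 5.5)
Your proposal is correct and follows the same route as the paper: you invoke Zaslavsky's theorem to get $r(\mathcal{A}) = (-1)^p\chi_{\mathcal{A}}(-1)$ and then use $r_{\mathcal{B}}(\mathcal{A}) \le r(\mathcal{A})$. Your Step~2 (each nonempty $\mathcal{U}_t\cap\mathcal{K}$ is convex, hence a single region, so regions of $\mathcal{B}$ inject into regions of $\mathcal{A}$) makes rigorous the inequality that the paper justifies only informally, by noting that some intersections may lie outside $\mathcal{B}$.
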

\begin{proof}
    It follows from Zaslavsky's Theorem \cite{stanley2007introduction} \cite{zaslavsky1997facing}  that 
    $$r(\mathcal{A}) = (-1)^p \chi_{\mathcal{A}}(-1), $$
    where $r(\mathcal{A})$ represents the number of regions produced by a hyperplane arrangement $\mathcal{A}$ and $\chi_{\mathcal{A}}(-1)$ is the corresponding characteristic polynomial evaluated at $-1$. However, our analysis is only over a compact region of interest $\mathcal{B}$, and there could potentially be intersections outside of $\mathcal{B}$. Thus, $r_{\mathcal{B}}(\mathcal{A}) \leq r(\mathcal{A})$.
    \hfill \qed
\end{proof}
Recall the hyperplane arrangement from Figure~\ref{fig:arrangement_example} with characteristic polynomial $\chi_{\mathcal{A}}(t) = t^2 - 3t + 2$. Following Zaslavsky's Theorem, we compute $r(\mathcal{A}) = (-1)^2 (1 + 3 + 2) = 6$. Upon visual inspection of Figure~\ref{fig:arrangement_example}, we confirm that there are exactly $6$ regions. 

\begin{proposition}
    The time complexity for Algorithm~\ref{alg:nnv} is $\mathcal{O}( (-2)^p \chi_{\mathcal{A}}(-1))$.
    \label{prop: time_complexity}
\end{proposition}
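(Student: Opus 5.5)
The plan is to read the claimed bound as $(-2)^p\chi_{\mathcal{A}}(-1) = 2^p\,(-1)^p\chi_{\mathcal{A}}(-1) = 2^p\, r(\mathcal{A})$, using Zaslavsky's Theorem as in the proof of Lemma~\ref{lem_upper_bound}. The proof then reduces to two claims.
\begin{enumerate}[label=(\alph*)]
\item Algorithm~\ref{alg:nnv} processes at most $r_{\mathcal{B}}(\mathcal{A}) \le r(\mathcal{A})$ regions.
\item The work charged to each region is $\mathcal{O}(2^p)$.
\end{enumerate}
The cost model will be made explicit. The network size $n$ enters only through the number of regions. Evaluations of $V_{nn}$, $f$, the centroid computation, and the Householder construction of $T^{(i)}_p$ each count as unit (or lower-order, polynomial in $p$) operations. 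The inner optimizer of line~\ref{alg:line:argmaxfx} (SHGO) is treated as an oracle of fixed cost per call.

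I would proceed line by line through Algorithm~\ref{alg:nnv}. Test~\ref{test:cond_1} is a single evaluation, so it costs $\mathcal{O}(1)$. Region enumeration (line~\ref{alg:nnv:line:enumerate}, via the method of Appendix~\ref{appendix: enumerate_regions}) produces each polytope $\overline{\mathcal{R}}_i$ once. By Lemma~\ref{lem_upper_bound}, there are at most $(-1)^p\chi_{\mathcal{A}}(-1)$ such polytopes, so it suffices that the enumeration costs a bounded amount per region emitted.

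The loop for Test~\ref{test:cond_3} does a constant number of operations per region: one centroid, one gradient (constant on the region by Proposition~\ref{constant_grad}), one Householder rotation, and one optimization call. It therefore contributes $\mathcal{O}(r(\mathcal{A}))$, which is dominated by the target bound.

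The dominant term is the vertex loop for Test~\ref{test:cond_2}. Here I would bound $|\mathrm{vert}(\overline{\mathcal{R}}_i)|$ by $2^p$, arguing that each vertex of a region clipped to the box $\mathcal{K}$ is determined by a choice of $p$ active constraints together with a sign/side pattern. In the benchmark setting of Table~\ref{tab:cubic-decrease}, each region is combinatorially a (possibly sheared) orthant piece of the box, which has at most $2^p$ corners. Summing over regions gives $\mathcal{O}(2^p r(\mathcal{A})) = \mathcal{O}((-2)^p\chi_{\mathcal{A}}(-1))$, which also absorbs the other terms.

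The main obstacle is justifying the per-region vertex bound of $2^p$. A general $p$-polytope with $m$ facets can have as many as $\mathcal{O}(m^{\lfloor p/2\rfloor})$ vertices by the Upper Bound Theorem, so $2^p$ does not hold for arbitrary arrangements. My first attempt would be a charging argument. Each vertex of $\overline{\mathcal{R}}_i$ is a rank-$0$ element of the intersection lattice of $\mathcal{A}$ augmented by the facets of $\mathcal{K}$. A simple vertex is adjacent to at most $2^p$ regions, one for each sign choice of its $p$ defining hyperplanes. Counting (vertex, region) incidences from the vertex side then gives at most $2^p$ times the number of vertices. This must be related back to $r(\mathcal{A})$ through Zaslavsky's formula, using that $\chi_{\mathcal{A}}(-1)$ counts faces with alternating signs.

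If this double counting cannot be closed in general, I would fall back to stating the result under a genericity assumption, namely that the arrangement is in general position and each clipped region has at most $2^p$ vertices. I would note that this assumption is what makes the bound tight in Table~\ref{tab:cubic-decrease}.
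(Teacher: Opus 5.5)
Your step (a) matches the paper. The gap is in step (b), and it comes from placing the factor $2^p$ somewhere other than where the paper places it.

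The paper takes the Test~\ref{test:cond_3} loop (lines 14--22) as the dominant cost. It charges $2^p$ to the one optimization call on line~\ref{alg:line:argmaxfx} made for each region, and multiplies by the region count from Lemma~\ref{lem_upper_bound}. This gives $2^p(-1)^p\chi_{\mathcal{A}}(-1)$. The paper simply asserts the $2^p$ scaling of that per-region (SHGO) call. You instead make that call a fixed-cost oracle, which discards exactly the factor the paper's argument uses. You then try to recover $2^p$ from the Test~\ref{test:cond_2} vertex loop through the per-region bound $|\mathrm{vert}(\overline{\mathcal{R}}_i)|\le 2^p$.

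That per-region bound is false, and not only for degenerate arrangements. For $p=2$, take $m$ lines tangent to a circle at generic points. They are in general position, and the cell containing the centre is an $m$-gon, so one region already has $m\gg 4$ vertices. Your fallback hypothesis is therefore not a genericity assumption. It is an extra combinatorial hypothesis that amounts to what you need to prove, and adopting it changes the statement.

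The charging argument also does not close as written, for two reasons.
\begin{enumerate}[label=(\roman*)]
\item A vertex where more than $p$ hyperplanes meet lies on more than $2^p$ regions; for example, $m$ concurrent lines in the plane give $2m$ regions around their common point.
\item The vertices created by clipping to $\partial\mathcal{K}$ are not vertices of $\mathcal{A}$, so Zaslavsky's formula for $\mathcal{A}$ does not bound their number.
\end{enumerate}
For simple arrangements you could bound the interior incidences in aggregate by $2^p\binom{n}{p}\le 2^p r(\mathcal{A})$. You would still have to control the boundary vertices, and that is a total-over-regions bound, not the per-region one your proof uses.

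To obtain the stated bound along the paper's lines, do not treat the line~\ref{alg:line:argmaxfx} call as constant-cost. Charge the $2^p$ to that per-region optimization.
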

\begin{proof}
The dominating portion of the algorithm in terms of time complexity is lines 14-22. Iterating through the number of regions scales by the upper bound found in Lemma~\ref{lem_upper_bound}. For each partition, we solve an optimization problem on line 17, which scales by $2^p$, where $p$ is the dimension of the system. Thus, the overall time complexity is $\mathcal{O}( (2)^p (-1)^p \chi_{\mathcal{A}}(-1)) = \mathcal{O}( (-2)^p \chi_{\mathcal{A}}(-1))$.
\hfill \qed 
\end{proof}
As shown in Proposition~\ref{prop: time_complexity}, the time complexity is dependent on the number of total regions a hyperplane arrangement generates. To illustrate this result, the upper bound from Lemma~\ref{lem_upper_bound} for the neural network $V_{nn}(x)=\sum_i^p \sigma(x_i)+\sigma(-x_i)=\sum_i^p |x_i|=||\mathbf{x}||_1$ is given in Table~\ref{tab:cubic-decrease} with a variable number of neurons and state dimension. The region of interest was fixed $[-10, 10]^p$. Notice that the number of regions in $\mathcal{B}$ is much less than the upper bound. In this situation, Proposition~\ref{prop: time_complexity} yields a more conservative bound on the time complexity of our algorithm, but a larger $\mathcal{B}$ would lead to a tighter bound. 

\begin{table}
    \centering
    \caption{Lyapunov verification results for HyParLyVe and $\alpha$-$\beta$-CROWN across network sizes and dynamical systems. The columns denote the number of neurons in the network,  state space dimension, number of regions in $\mathcal{B}$, number of counterexamples detected by HyParLyVe, HyParLyVe time taken, and CROWN time taken. Timeout is two hours.}
    \label{tab:results}
\begin{threeparttable}
\begin{tabular}{@{}cclccc@{}}
    \toprule
    $n$ & $p$ & $r_{\mathcal{B}}(\mathcal{A})$ & $|S|$ & \multicolumn{1}{c}{HyParLyVe (s)} & \multicolumn{1}{c}{$\alpha$-$\beta$-CROWN (s)} \\
    \midrule
    \multicolumn{6}{@{}l}{\textit{Negative Cubic \eqref{eq:cubic_decrease}}} \\[1pt]
      4 &  2 &    4 & 0 &    1.2 &  4.7 \\
      6 &  3 &    8 & 0 &    3.7 &   11.8 \\
      8 &  4 &   16 & 0 &    3.8 &   13.7 \\
     10 &  5 &   32 & 0 &    4.3 &   18.1 \\
     12 &  6 &   64 & 0 &    5.5 &   29.2 \\
     14 &  7 &  128 & 0 &   12.2 &   61.0 \\
     16 &  8 &  256 & 0 &   17.6 &  130 \\
     18 &  9 &  512 & 0 &   37.8 &  313 \\
     20 & 10 & 1024 & 0 &  115 &    740 \\
    \midrule
    \multicolumn{6}{@{}l}{\textit{Bilinear Oscillator ~\eqref{eq:bilinear_oscillator}}} \\[1pt]
      10 &  2 &    54\tnote{a} & 20 &    1.8 &   timeout \\
      10 &  2 &    55 & 0  &    2.0 &   9.8 \\
     \midrule
    \multicolumn{6}{@{}l}{\textit{Coupled Bilinear Oscillators \eqref{eq:4d_oscillators}}} \\[1pt]
     50 & 4 & 14,474    &   3 &   390 & \multicolumn{1}{c}{timeout} \\
     80 & 4 & 1,021,050 & 574 &  1426 &          1458             \\
     80 & 4 & 876,655   & 0 &  1519 & \multicolumn{1}{c}{timeout} \\
    \bottomrule
\end{tabular}\begin{tablenotes}\footnotesize

      \item[a] \textit{This is the network shown in Figure \ref{fig:cells}.}
      \end{tablenotes}
\end{threeparttable}
\end{table}
\section{SIMULATIONS} 
\label{section: simulations}

We evaluate our algorithm using  three different system dynamics. First, we use a negative cubic function:
\begin{equation}
\label{eq:cubic_decrease}
\dot{x}_i = -x_i^3,
\end{equation}
for $i \in [10]$ and $\mathcal{B} =[-10, 10]^i$.
Second, we use a bilinear oscillator:
\begin{equation} \label{eq:bilinear_oscillator}
\begin{split}
    \dot{x}_1 &= -x_1 + x_1 x_2 \\
    \dot{x}_2 &= -x_2 - x_1^2
\end{split}
\end{equation}
with $\mathcal{B} = [-4, 4]^2$.
Finally, we use a coupled pair of bilinear oscillators:
\begin{equation} \label{eq:4d_oscillators}
    \begin{split}
    \dot{x}_1 &= -x_1 + x_1 x_2\\
    \dot{x}_2 &= -x_2 - x_1^2+ \varepsilon x_3\\
    \dot{x}_3 &= -x_3 + x_3 x_4\\
    \dot{x}_4 &= -x_4 - x_3^2 + \varepsilon x_1, 
    \end{split}
\end{equation} 
with $\varepsilon = 0.1$ and $\mathcal{B} = [-2, 2]^4$.

The benchmark algorithm we use is the $\alpha$-$\beta$-CROWN branch-and-bound complete verifier~\cite{wang2021beta}. Experiments were conducted on CentOS Linux with an AMD EPYC 7662 64-Core Processor and NVIDIA A100 40GB GPU.
Similarly to~\cite{yang2024lyapunov}, we omit a hypercube hole of $0.1\%$ of the region of interest size from Tests \ref{test:cond_2} and \ref{test:cond_3} for float32 numerical stability. Note that $\alpha$-$\beta$-CROWN requires hyperrectangular bounds. To avoid the discontinuity at the origin, we constructed each $p$-dimension neural Lyapunov verification as a set of $4p$ problems for $\alpha$-$\beta$-CROWN to solve.

Table \ref{tab:results} contains comparative results for the three different system dynamics and different network sizes. The columns are in the following order: number of neurons in the candidate, state space dimension, number of regions contained in $\mathcal{B}$, number of counterexamples detected by HyParLyVe, time taken by HyParLyVe, and time taken by CROWN (in seconds). Furthermore, $|S|$ implies that the candidate is a valid Lyapunov function. Note that our algorithm outperforms $\alpha$-$\beta$-CROWN in every instance.


At the 2-hour per-network timeout, $\alpha$-$\beta$-CROWN had verified Condition \ref{cond_2} for the entire 50-neuron bilinear oscillator network and Condition \ref{cond_3} for 7-of-8 partitions of the network. The final partition failed to converge after branch-and-bound explored 136-million nodes. 
For this case, our algorithm found 3 counterexamples in 390 seconds. 
For the 80-neuron network that our algorithm verified in 1519 seconds, $\alpha$-$\beta$-CROWN neither proved nor disproved any properties before timeout.

In Figure~\ref{fig:cells}, we show a plot of the candidate neural Lyapunov function for the bilinear oscillator (the first row in the second section of Table~\ref{tab:results}) with the induced regions $\mathcal{R}_i$ overlayed in different colors. The locations of the 20 counterexamples detected by HyParLyVe are depicted in red. 
Note that the set of counterexamples in Figure~\ref{fig:cells} is distributed throughout the region of interest, and could be particularly useful for retraining the neural network to get closer to a valid Lyapunov function. 

\begin{figure}
    \centering
    \includegraphics[width=\linewidth]{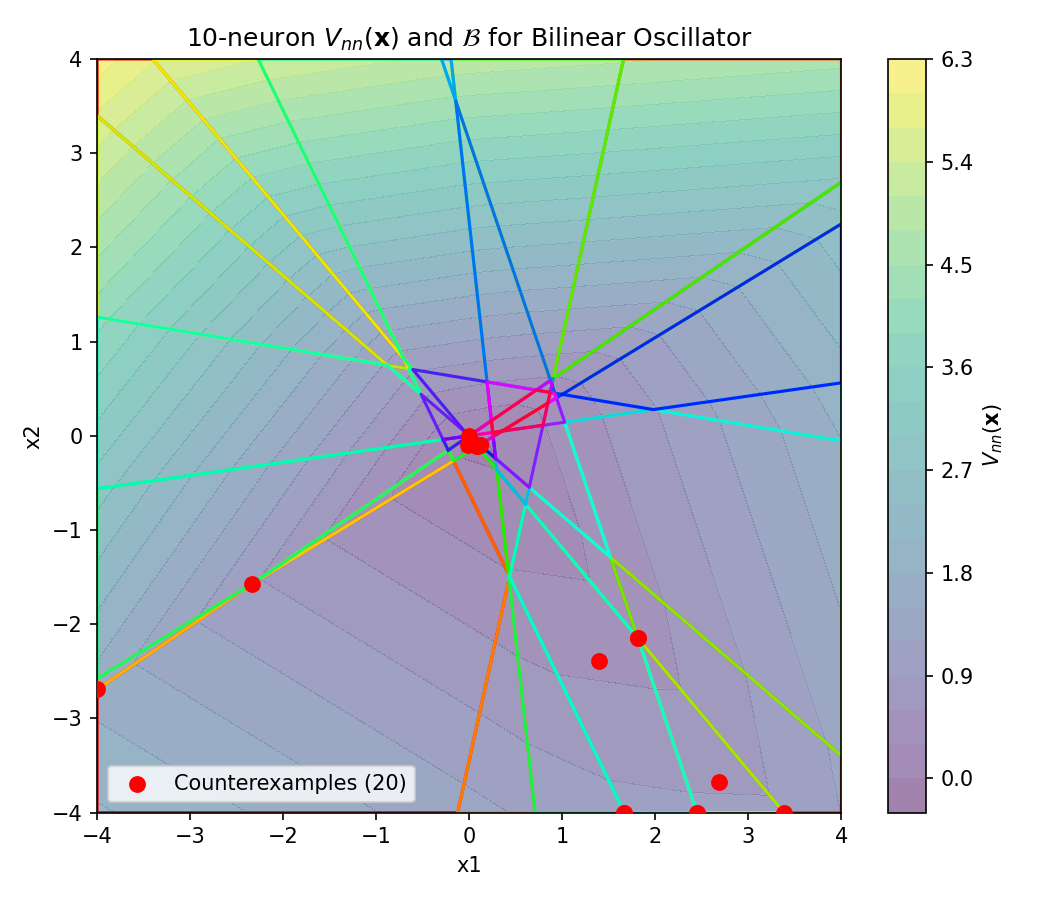}
    \caption{A candidate 10-neuron $V_{nn}$ for bilinear oscillator dynamics: $\dot{x}_1 = -x_1 + x_1x_2$, $\dot{x}_2=-x_1^2$. There are 54 regions outlined in distinct colors (note: very small regions do not render in the image). Algorithm \ref{alg:nnv} identifies 2 vertices that failed Test \ref{test:cond_2} and 8 regions that failed Test \ref{test:cond_3}.}
    \label{fig:cells}
\end{figure}

\section{CONCLUSION}
\label{section: conclusion}
We developed a novel method for verifying ReLU neural Lyapunov functions by leveraging the geometric properties of hyperplane arrangements. We proved that the positivity condition can be verified by checking the vertices of each region, and also reduced the decrease condition to a sequence of optimization problems. We further proved results on the computational complexity of our algorithm, and demonstrated the effectiveness of HyParLyVe through comparison with state-of-the-art verification methods. For future work, we plan to extend this method to deep neural networks as well as neural network controllers. Another direction for future work is to leverage the produced counterexamples from HyParLyVe to retrain neural Lyapunov functions, guiding them towards valid Lyapunov functions.




\bibliographystyle{IEEEtran} 
\bibliography{mybib} 

\appendix
The following subsections act as supporing information for our implementation of Algorithm~\ref{alg:nnv}. 
\subsection{Region Enumeration}
\label{appendix: enumerate_regions}

The \textsc{EnumerateRegions} call on Line \ref{alg:nnv:line:enumerate} of Algorithm \ref{alg:nnv} constructs the region of interest $\mathcal{B}$ from Definition \ref{def:roi_B}, via breadth-first search of the regions $\mathcal{R}_i$ in the arrangement of hyperplanes $\mathcal{A}$ bounded by the compact domain $\mathcal{K}$. We seed the algorithm by computing the binary vector $\mathbf{U}(\mathbf{x})$ given in Equation \ref{eq: u_hype} and the corresponding half-planes for an initial point $\mathbf{x}\in \mathcal{K}$, yielding $\mathcal{R}_1$.

We then visit each $\mathcal{R}_i$ by individually flipping each element $\mathbf{U}_i$ of $\mathbf{U}$ that corresponds to a hyperplane lying on the boundary of $\mathcal{R}_i$. Each such flip yields the activation pattern of the region $\mathcal{R}_{i'}$ sharing that face with $\mathcal{R}_i$; if $\mathbf{U}(\mathcal{R}_{i'})$ has not yet been visited, we enqueue it.
This search is bounded by Lemma \ref{lem_upper_bound}.

\subsection{Rotated Dynamics Maximization}
\label{appendix: rotated_dynamics_max}
This subsection describes the implementation of the maximization of the transformed dynamics $ \mathbf{v}^{(i)}_1(\mathbf{x})$ on line \ref{alg:line:argmaxfx} of Algorithm \ref{alg:nnv}, which is required to adjudicate Test \ref{test:cond_3}. 

Finding the exact global solution of a non-convex problem is NP-hard; however, certain problems may be solved more efficiently by exploiting their structure~\cite{Danilova2022}.
To solve the optimization problem from Theorem~\ref{thm: optimize}, we selected the SciPy implementation of simplicial homology global optimization (SHGO)~\cite{endres2018simplicial} with the sequential least squares programming (SLSQP) solver~\cite{kraft1988software}.
This combination guarantees finding the global optimum of the transformed dynamics function within each region $\mathcal{R}_i$ while leveraging the region structure for improved performance.

The SHGO algorithm approximates the surface of an optimization function by sampling the function and forming a simplicial complex from those samples. This complex is then split into approximately, locally convex subdomains and provided to local optimizers. 
Each test of a given region requires a single SHGO call, which we bound with a hyperrectangle formed by the extreme vertices of $\mathcal{R}_i$. These bounds tighten the domain that SHGO samples, which improves the simplicial complex approximation as compared to naively sampling the entire compact domain $\mathcal{K}$.


SHGO only accepts bounds in the form of a hyperrectangle, so may sample points outside of the polytopic region being tested. If an optimizer were to converge on a point outside of $\mathcal{R}_i$, that point would need to be discarded and an expensive optimization would be wasted.
Recall from Definition \ref{def:roi_B} that $\mathcal{B}$ is a union of convex polytopes, derived from the arrangement of hyperplanes $\mathcal{A}$.
We provide the half-spaces matrix form $Ax\leq b$ as linear constraint to SLSQP. Thus, the optimizer only considers optima within $\mathcal{R}_i$.
Finally, we compute the Jacobian $\mathbf{J}( \mathbf{v}^{(i)}_1(\mathbf{x}))$ via PyTorch Autograd and provide it to SLSQP, which expedites convergence due to improved stepping.

\newcommand{\func}[1]{\textsc{#1}}
Below is the algorithm used to enumerate each of the regions created by the hyperplane arrangement of the hidden neurons. 
\begin{algorithm}
\caption{ENUMERATEREGIONS}
\begin{algorithmic}[1]
\Require Network parameters $\theta$, compact domain $\mathcal{K}$
\Ensure $\mathcal{B}$ contains all convex polytopes formed by $\mathcal{N}$ in $\mathcal{K}$

\State $\mathcal{B} \gets \emptyset, \quad Q \gets \emptyset, \quad V \gets \emptyset$ \Comment{$Q$ is queue, $V$ is visited}
\State $x \gets \func{Seed}(\mathcal{K})$ \Comment{$x \in \mathcal{K}$, away from origin}
\State $\mathbf{U}_0 \gets \func{GetActivations}(\theta, x)$
\State $V \gets V \cup \{\mathbf{U}_0\}$
\State $\func{Enqueue}(Q, \mathbf{U}_0)$

\While{$Q \neq \emptyset$}
    \State $\mathbf{U}_k \gets \func{Dequeue}(Q)$
    \State $H \gets \func{GetHyperplanes}(\mathcal{N}, \mathcal{K}, \mathbf{U}_k)$
    \State $\mathcal{R}_k \gets \func{Solve}(H)$ \Comment{Chebyshev Center and SciPy HalfspaceIntersection}
    
    \If{$\mathcal{R}_k \neq \bot$} \Comment{Skip degenerate regions}
        \State $\mathcal{B} \gets \mathcal{B} \cup \{\mathcal{R}_k\}$
        \For{each neuron $i$ s.t. $\exists v \in \text{vertices}(\mathcal{R}_k) : w_i \cdot v + b_i = 0$}
            \State $\mathbf{U}' \gets \func{FlipBit}(\mathbf{U}_k, i)$ \Comment{Only flip activation for faces of $\mathcal{R}_k$}
            \If{$\mathbf{U}' \notin V$}
                \State $V \gets V \cup \{\mathbf{U}'\}$
                \State $\func{Enqueue}(Q, \mathbf{U}')$
            \EndIf
        \EndFor
    \EndIf
\EndWhile

\State \Return $\mathcal{B}$
\end{algorithmic}
\end{algorithm}

\end{document}